\newtheorem{proposition}{Proposition}
\newcommand{\mv}[1]{\mbox{\boldmath{$ #1 $}}}
\newcommand{\A}{\bm A}
\newcommand{\HH}{\bm H}
\newcommand{\V}{\bm V}
\newcommand{\aaa}{\bm{a}}
\newcommand{\N}{\mathcal{N}}
\newcommand{\II}{\mathcal{I}}
\newcommand{\G}{\bm G}
\newcommand{\Ss}{\bm S}
\newcommand{\W}{\bm W}
\newcommand{\uuu}{\bm u}
\newcommand{\vvv}{\bm v}
\newcommand{\g}{\bm g}
\newcommand{\ttheta}{\mathbf \Theta}
\newcommand{\K}{\mathcal{K}}
\newcommand{\E}{\mathcal{E}}
\begin{document}
\title{Weighted Sum Power Maximization for Intelligent Reflecting Surface Aided  SWIPT}
\author{\IEEEauthorblockN{Qingqing Wu,  \emph{Member, IEEE} and Rui Zhang, \emph{Fellow, IEEE}   }
\thanks{The authors are with the Department of Electrical and Computer Engineering, National University of Singapore, email:\{elewuqq, elezhang\}@nus.edu.sg.}   }

\maketitle
\vspace{-0.5cm}
\begin{abstract}
The low efficiency of far-field wireless power transfer (WPT) limits the fundamental rate-energy (R-E) performance trade-off of the  simultaneous wireless information and power transfer (SWIPT) system. To address this challenge,  we propose in this letter a new SWIPT system aided by the emerging  intelligent reflecting surface (IRS) technology. By leveraging massive low-cost passive elements that are able to reflect the signals with adjustable phase shifts, IRS achieve a high passive beamforming gain, which is appealing for drastically enhancing  the WPT efficiency and thereby the R-E trade-off of SWIPT systems.  We consider an IRS being deployed to assist a multi-antenna access point (AP) to serve multiple information decoding receivers (IDRs) and energy harvesting receivers (EHRs).   We aim to maximize the weighted sum-power received by EHRs via jointly optimizing the transmit precoders at the AP and reflect phase shifts at the IRS, subject to the individual signal-to-interference-plus-noise ratio (SINR) constraints for IDRs. Since this  problem is non-convex,   we propose efficient  algorithms to obtain suboptimal solutions for it. In particular, we prove  that it is sufficient to send information signals only at the AP to serve both IDRs and EHRs regardless of their channel realizations. Moreover,  simulation results  show significant performance gains achieved by our proposed designs over benchmark schemes.

\end{abstract}
\vspace{-0.05cm}

\vspace{-0.4cm}
\section{Introduction} 
Radio frequency (RF) transmission enabled simultaneous wireless information and power transfer (SWIPT) is   a promising solution for future wireless powered Internet-of-Things (IoT) \cite{wu2016overview}. However, in practice, a typical energy harvesting receiver (EHR) requires much higher receive power (say, tens of dB more) than that needed for an information decoding receiver (IDR).
 As such, how to improve the efficiency of wireless power transfer (WPT) for EHRs is a critical challenge to resolve for fundamentally enhancing the rate-energy (R-E) trade-off of SWIPT systems \cite{xu2014multiuser}.


Recently, intelligent reflecting surface (IRS) comprised of  a large number of low-cost passive reflecting elements has been proposed as a competitive technology  to improve the spectrum and energy efficiency of future wireless networks \cite{JR:wu2019IRSmaga,wu2018IRS,JR:wu2018IRS,JR:wu2019discreteIRS}. In particular,   the phase shift induced by each reflecting element at the IRS can be adjusted so as to collaboratively  achieve high  passive beamforming gains. A comprehensive overview of IRS-aided  wireless networks was provided in \cite{JR:wu2019IRSmaga}  by including the IRS signal model, practical hardware architecture, signal processing/channel estimation as well as network deployment. Furthermore, it was shown in  \cite{JR:wu2018IRS} that IRS is capable of creating not only a ``signal hot spot'' but also a virtually ``interference-free'' zone in its vicinity via joint active beamforming at the access point (AP) and passive beamforming at the IRS. In particular, an asymptotic squared signal-to-noise ratio (SNR) gain is derived in \cite{JR:wu2018IRS}, which is larger than the linear beamforming gain  in both massive MIMO and MIMO relay systems \cite{JR:wu2018IRS,zhang2016fundamental}. This is because the passive signal reflection of IRS is generally noise-free and full-duplex, thus combining  the functionalities of both receive and transmit arrays for receive and reflect beamforming.
While prior works on IRS or its equivalent technologies (e.g.  \cite{wu2018IRS,JR:wu2019IRSmaga,JR:wu2018IRS,JR:wu2019discreteIRS,dongfang2019,yanggang2019,han2018large,huangachievable,yan2019passive}) mainly focus on exploiting IRS for wireless information transmission,  the high  beamforming gain brought  by low-cost IRS is also appealing for WPT \cite{JR:wu2019IRSmaga}.  In particular, without the need of any transmit RF chains, IRS enables  intelligent signal reflection over a large aperture to compensate the high RF signal attenuation over long distance and thereby creates an effective energy harvesting/charging zone in its neighborhood,  as illustrated in Fig. 1. However, to our best knowledge, IRS-aided SWIPT systems have not been studied in the literature yet,  which motivates this work.

\begin{figure}[!t]
\centering
\includegraphics[width=0.45\textwidth]{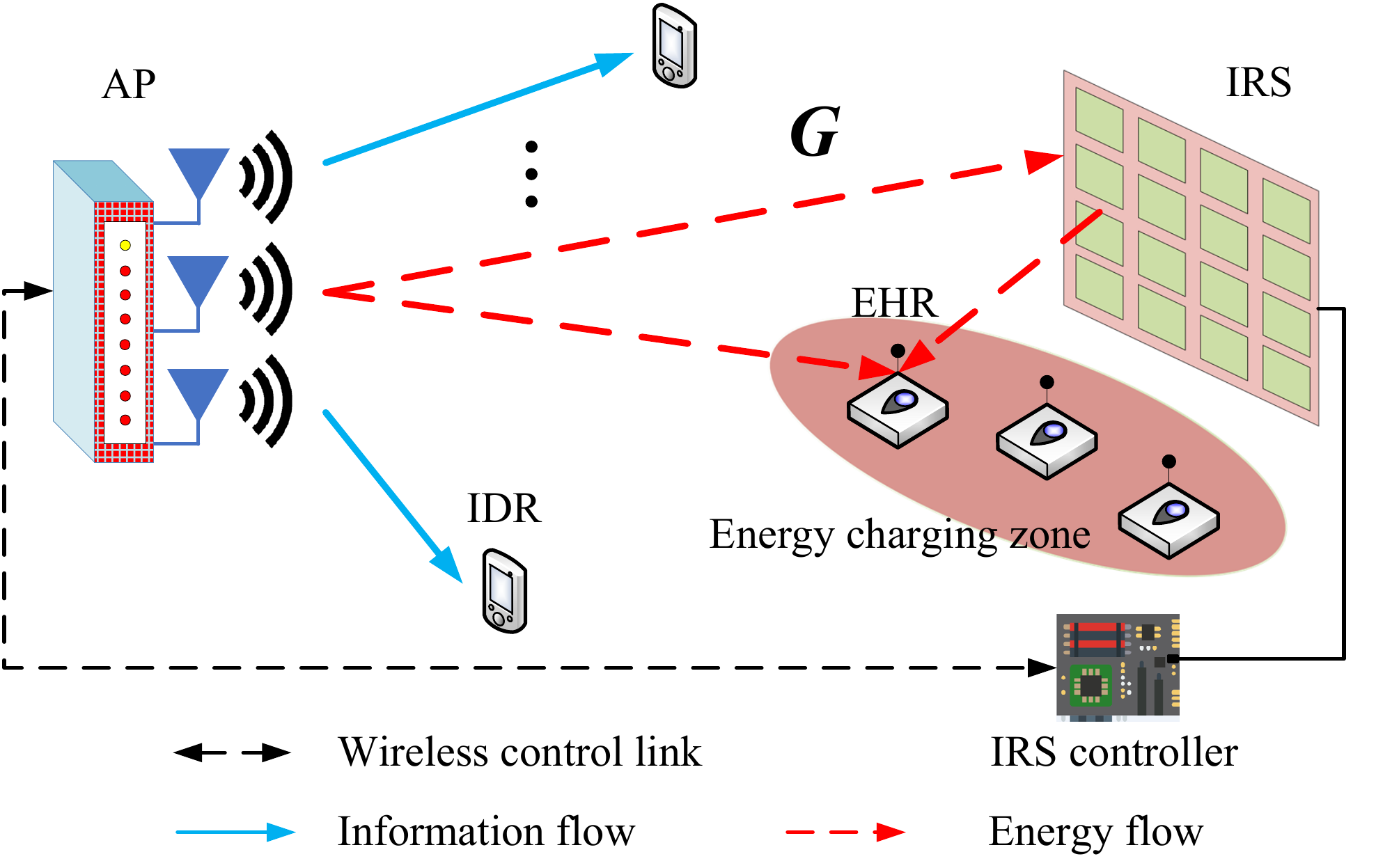}\vspace{-0.05cm} 
\caption{An IRS-assisted SWIPT system. } \label{system:model}\vspace{-0.2cm}
\end{figure}

As shown in Fig. \ref{system:model},  we study  an IRS-assisted SWIPT system with multiple antennas at the AP and single antenna at each of multiple  IDRs and EHRs. The IRS is deployed to  improve the WPT efficiency for a cluster of EHRs (e.g., IoT sensors or tags) located in an energy charging zone within the signal  coverage of the IRS. For low-complexity implementation, it is assumed that the IDRs do not possess the capability of cancelling the interference from energy signals transmitted by the AP (if any). Under the above setup, we  aim to maximize the weighted sum-power received by EHRs while satisfying a set of individual   signal-to-interference-plus-noise ratio (SINR) targets at IDRs, via jointly optimizing the active transmit beamforming vectors (precoders) at the AP and passive reflect phase shifts at the IRS. Since information signals for IDRs can be exploited for energy harvesting at EHRs, a fundamental question is whether dedicated energy signals are required for our considered system.  Note that for the conventional  SWIPT system  without IRS, it was shown in  \cite{xu2014multiuser} that dedicated energy signals are not needed, if the channels of all users (both EHRs and IDRs) are statistically independent. However, this assumption does not hold in general for IRS-aided SWIPT systems due to the additional signal path reflected by the IRS via arbitrary phase shifts. Thus, this result needs to be re-examined in our considered new SWIPT system with IRS.



Interestingly, we prove  that dedicated  energy signals are not required even for the IRS-aided SWIPT system, thus extending the above important result in \cite{xu2014multiuser} to the case with arbitrary user channels (provided that there is at least one IDR present in the system).
 Furthermore,  we propose efficient algorithms to obtain suboptimal solutions for the formulated problem, which is non-convex and thus difficult to solve optimally. Numerical results show significant performance gains achieved by our proposed designs and reveal useful insights on the practical deployment of IRS for SWIPT applications.

\emph{Notations:} $\mathbb{C}^{x\times y}$ denotes the space of $x\times y$ complex-valued matrices. For a vector $\bm{x}$, $\|\bm{x}\|$ denotes its Euclidean norm and  $[\bm{x}]_n$ denotes its $n$-th element.
The distribution of a circularly symmetric complex Gaussian (CSCG) random variable with mean ${x}$ and variance ${\sigma}^2$ is denoted by  $\mathcal{CN}({x},{\sigma}^2)$; and $\sim$ stands for ``distributed as''. For a square matrix $\Ss$, ${\rm{tr}}(\Ss)$  denotes its trace and $\Ss\succeq \bm{0}$ means that $\Ss$ is positive semi-definite.  For any general matrix $\A$, $\A^H$ and ${\rm{rank}}(\A)$ denote its conjugate transpose and rank, respectively. $\mathbb{E}(\cdot)$ denotes the statistical expectation. $ \mathrm{Re}\{\cdot\}$ denotes the real part of a complex number. For a set $\mathcal{N}$, $|\mathcal{N}|$ denotes its cardinality.

\section{System Model and Problem Formulation}
\subsection{System Model}
As shown in Fig. \ref{system:model}, we consider an IRS-assisted wireless network where an IRS with $N$ reflecting elements is deployed to assist in the SWIPT from the AP with $M$ antennas to two sets of single-antenna receivers, i.e.,  IDRs  and EHRs, denoted by the sets $\K_{\II}=\{1, \cdots,K_{I}\}$ and $\K_{\E}=\{1, \cdots,K_{E}\}$, respectively. The set of reflecting elements is denoted by $\mathcal{N}$ with $|\mathcal{N}|=N$.    For simplicity, we consider linear precoding at the AP and assume that each IDR/EHR is assigned with one dedicated information/energy beam without loss of generality. Thus, the transmitted signal from the AP is given by
\begin{align}
\mv{x} = \sum_{i\in {\mathcal{K_I}}}{\mv w}_i s_i^{\rm{ID}} +\sum_{j\in {\mathcal{K_E}}}{\mv v}_j s_j^{\rm{EH}}, \label{equa:jnl:1}
\end{align}
where ${\mv w}_i\in {\mathbb C}^{M\times 1}$ and ${\mv v}_j\in {\mathbb C}^{M\times 1}$ are the precoding vectors for IDR $i$ and EHR $j$, respectively, with the information-bearing and energy-carrying signals denoted by $s_i^{\rm{ID}}$ and $s_j^{\rm{EH}}$, satisfying $s_i^{\rm{ID}} \sim \mathcal{CN}(0,1), \forall i\in \mathcal{K_{I}}$ and   $\mathbb{E}\left(|s_j^{\rm{EH}}|^2\right)=1,\forall j\in \mathcal{K_{E}}$ \cite{xu2014multiuser}.
 Suppose that the AP has a total transmit power budget  $P$; from (\ref{equa:jnl:1}) we thus have $\mathbb{E}(\mv{x}^H\mv{x}) = \sum_{i\in {\mathcal{K_I}}}\|{\mv w}_i \|^2 +\sum_{j\in {\mathcal{K_E}}}\|{\mv v}_j\|^2 \le P$.

We assume  a quasi-static flat fading channel model and the channel state information (CSI) of all links are assumed to be perfectly known at the AP, for the purpose of characterizing the optimal R-E trade-off of the considered SWIPT system. Denote by $\bm{h}^H_{d,i}\in \mathbb{C}^{1\times M}$ and  $\bm{h}^H_{r,i}\in \mathbb{C}^{1\times N}$  the baseband equivalent channels from the AP to IDR $i$ and from
 the IRS to IDR $i$, respectively. Their counterpart channels for EHR $j$ are denoted by $\bm{g}^H_{d,j}$ and  $\bm{g}^H_{r,j}$, respectively, and the channel  from the AP to IRS is denoted by $\bm{G}\in \mathbb{C}^{N\times M}$.  Let  $\ttheta  = \text{diag} (\beta_1 e^{j\theta_1}, \cdots, \beta_N e^{j\theta_N})$ denote the (diagonal) reflection-coefficient matrix at the IRS, where   $\beta_n \in (0, 1]$ and $\theta_n\in [0, 2\pi)$ denote the reflection amplitude and  phase shift of the $n$th element, respectively \cite{JR:wu2019IRSmaga,JR:wu2018IRS}.  In this paper, we set $\beta_n=1$, $n\in \N$ to maximize the signal reflection of the IRS.  Thus, the  signal received at IDR $i$ is given by\footnote{Although the IRS is deployed mainly for enhancing the WPT for EHRs, we take into account the IRS-reflected signals at  IDRs without loss of generality.}  
\begin{align}
{y}_i^{\rm{ID}} = {\mv h}^H_i\mv{x} + z_i,\ \forall i\in \mathcal{K_{I}}, \label{equa:jnl:2}
\end{align}
where  ${\bm{h}}^H_i= \bm{h}^H_{r,i}\ttheta\bm{G}+\bm{h}^H_{d,i}$ and $z_i\sim \mathcal{CN}(0,\sigma_i^2)$ is the receiver noise. Under the assumption that IDRs do not possess the capability of cancelling the interference caused by energy signals,  the SINR of IDR $i$, $i\in \mathcal{K_{I}}$, is given by
\begin{align}\label{eq:SINR}
\text{SINR}_i = \frac{|{\bm{h}}^H_i\bm{w}_i |^2}{\sum\limits_{ k\neq i, k\in \K_{\II} }|{\bm{h}}^H_i\bm{w}_k |^2 + \sum\limits_{ j \in \K_{\E} }|{\bm{h}}^H_i\bm{v}_j |^2   + \sigma^2_i}.
\end{align}
On the other hand, by ignoring the noise power, the received RF power at EHR $j$, denoted by $E_j$, is given by
\begin{align}\label{EH:energy}
E_j= \sum\limits_{k\in\mathcal{K_I}}|\g^H_j {\mv w}_k|^2  + \sum \limits_{k\in\mathcal{K_E}}|\g^H_j {\mv v}_k|^2,\ \forall j\in \mathcal{K_{E}},
\end{align}
where $\g^H_j = \bm{g}^H_{r,j}\ttheta \bm{G} +  \bm{g}^H_{d,j}$.
\vspace{-0.2cm}
\subsection{Problem Formulation}
We aim to maximize the weighted sum-power received by EHRs subject to the individual SINR constraints at different IDRs, given by $\gamma_i, i\in \K_{\II}$.  Denote by  $\alpha_j\geq 0$ the energy weight of EHR $j$ where a larger value of $\alpha_j$ indicates a higher priority for sending  energy to EHR $j$ as compared to other EHRs. Let $\Ss =\sum_{j\in  \K_{\E}}\alpha_j\g_j\g^H_j$. Based on  \eqref{EH:energy}, the
weighted sum-power received by all EHRs can be expressed as
\begin{align}
\sum \limits_{j\in\mathcal{K_E}} \alpha_j E_j =  \sum\limits_{i\in\mathcal{K_I}}{\mv w}_i^H{\mv S}{\mv w}_i  + \sum\limits_{j\in\mathcal{K_E}}{\mv v}_j^H{\mv S}{\mv v}_j.
\end{align}
Accordingly,  the optimization problem is formulated as
\begin{align}
\!\!\!\!\text{(P1)}: \max_{\{\bm{w}_i\}, \{\bm{v}_j\},\bm{\theta}} & \sum_{i\in \K_{\II}} \bm{w}_i^H\Ss\bm{w}_i + \sum_{j\in \K_{\E}} \bm{v}_j^H\Ss\bm{v}_j   \label{eq:obj}\\
\mathrm{s.t.}~~~~&{\text{SINR}}_{i}\geq \gamma_i, \forall i \in \K_{\II}, \label{P1:SINRconstrn}\\
&\sum_{i\in \K_{\II}}\|\bm{w}_i\|^2 + \sum_{j\in \K_{\E}}\|\bm{v}_j\|^2\leq P,  \\
& 0\leq \theta_n \leq 2\pi, \forall n\in\mathcal{N}. \label{phase:constraints}
\end{align}
Since  the transmit precoders and IRS phase shifts are  intricately coupled in both the objective function and SINR constraints,  (P1) is a non-convex optimization problem that is challenging to solve.  Furthermore, it remains unknown whether sending dedicated energy beams (i.e., $\bm{v}_j$'s, $\forall j\in \mathcal{\K_\E}$) for  EHRs is necessary to attain the optimality of (P1) when there is at least one IDR in presence, i.e., $\bm{v}_j={\bm 0}, \forall j\in \mathcal{\K_\E}$ or not.

\vspace{-0.2cm}
\section{Special Case: IRS-Aided  WPT}
To draw useful insight, we first   consider a special case of (P1) where there exist only the EHRs (i.e., without any IDRs). In this case, (P1) is simplified as
\begin{align}
\text{(P2)}: ~~\max_{\{\bm{v}_j\}, \bm{\theta}} ~~~&\sum_{j\in \K_{\E}} \bm{v}_j^H\Ss\bm{v}_j    \label{eq:obj}\\
\mathrm{s.t.}~~~~& \sum_{j\in \K_{\E}}\|\bm{v}_j\|^2\leq P,  \\
& 0\leq \theta_n \leq 2\pi, \forall n\in\mathcal{N}. \label{phase:constraints}
\end{align}
However, (P2) is still non-convex due to the non-concave objective function with respect to $\bm{v}_j$'s and $\bm{\theta}$. Nevertheless, we observe that by fixing either the energy precoders or phase shifts, the resultant  problem of (P2)  can be solved efficiently, which thus motivates us to apply alternating optimization to solve (P2) by iteratively optimizing $\bm{v}_j$'s and  $\bm{\theta}$ until the convergence is reached, with the details given  as follows.

First, for any fixed $\bm{\theta}$, it can be shown  that the optimal energy precoders of all EHRs should align with  the principle eigenvector of $\Ss$ that corresponds to its largest eigenvalue \cite{xu2014multiuser}, denoted by ${\bar \vvv}_0(\Ss)$. Thus, without loss of optimality, only one common energy beam needs to be sent  for all EHRs and this energy precoder is given by $\vvv^*_0=\sqrt{P}\bar \vvv_0(\Ss)$.
In particular, if $\alpha_k\gg\alpha_j, \forall j\neq k$, we have $\Ss \approx \alpha_k\g_k\g^H_k$ and   $\vvv^*_0=\sqrt{P}\g_k/\|\g_k\|$, which means that the AP should beam the energy signal towards EHR $k$ exactly to maximize its receive  RF power.
%
Second, for any fixed $\vvv^*_0$, (P2) is reduced to
\begin{align}\label{optimize:theta}
\max_{ \bm{\theta}} ~~~&\sum_{j\in\mathcal{K_E}}\alpha_j |( \bm{g}^H_{r,j}\ttheta \bm{G} +  \bm{g}^H_{d,j})\bm{v}^*_0|^2\\  
\mathrm{s.t.}~~~~& 0\leq \theta_n \leq 2\pi, \forall n\in\mathcal{N}. \label{phase:constraints2}
\end{align}
Let  $\bm{g}^H_{d,j}\bm{v}^*_0 ={b}_{j}$ and  $\bm{g}^H_{r,j}\ttheta\bm{G}\bm{v}^*_0 =\bm{u}^H\bm{a}_{j}$ where $\bm{u} = [u_1, \cdots, u_N]^H$,  $u_n =e^{j\theta_n}, \forall n$, and $\bm{a}_{j}=\text{diag}(\bm{g}^H_{r,j})\bm{G}\bm{v}^*_0$, $\forall j$. Constraints in \eqref{phase:constraints2} are equivalent to $|u_n|=1, \forall n\in \mathcal{N}$. Thus, problem \eqref{optimize:theta} is equivalent  to
\begin{align}\label{optimize:theta2}
\max_{ \bm{u}} ~~~&\sum_{j\in\mathcal{K_E}} \alpha_j | {\bm u}^H {\bm a}_j+ b_j|^2 \\
\mathrm{s.t.}~~~~&|u_n|=1, \forall n\in\mathcal{N}. \label{unit:modulus:constraint}
\end{align}
Although  the constraints in the above problem are non-convex, we note that the objective function is convex with respect to $\uuu$, which allows us to apply the successive convex approximation (SCA) technique for solving  it.  Let $\A=\sum_{j\in \K_{\E}} \alpha_j \aaa_{j}\aaa_{j}^H$.  Since any convex function is lower-bounded globally  by its first-order Taylor expansion at any feasible point,  for a given local point $\hat{\uuu}$, the quadratic term $\uuu^H\A\uuu$ in \eqref{optimize:theta2} satisfies
\begin{align}
\uuu^H\A\uuu \geq 2\mathrm{Re}\{ \uuu^H  \A \hat{\uuu}\} - \hat{\uuu}^H\A\hat{\uuu},
\end{align}
 where  the equality holds at the point $\hat{\uuu}$. As such, the objective function of problem \eqref{optimize:theta2} is lower-bounded  by
\begin{align}\label{SCA:obj}
\!\! 2\mathrm{Re}\{ \uuu^H ( \A \hat{\uuu} \!+\!\!\! \sum_{j\in \K_{\E} }\alpha_j \aaa_jb^H_j ) \}\!-\!  \hat{\uuu}^H\A\hat{\uuu} \!+ \!\!\! \sum_{j\in \K_{\E} }\alpha_j |b_j|^2.
\end{align}
Based on \eqref{SCA:obj}, it is not difficult to show that the optimal solution is given by ${u}_n^*= 1$ if $\eta_n=0$ and ${u}_n^*= \frac{\eta_n}{|\eta_n|}$ otherwise, where  $\eta_n= {[\A \hat{\uuu} +\sum_{ j\in \K_{\E} }\alpha_j \aaa_jb^H_j]_n}, \forall n$. 

Since the objective value of (P2) is non-decreasing by alternately optimizing the energy precoder and phase shifts,  and it is also upper-bounded by a finite value, the proposed algorithm is guaranteed to converge. Furthermore,  this algorithm is of low complexity since the optimal solution in each iteration admits  a closed-form expression.


\section{IRS-Aided SWIPT: Energy Beamforming or Not?}
Next, we study the general case where there is at least one IDR coexisting with EHRs. Since the information signal can be utilized at EHRs  for energy harvesting, it remains unknown whether dedicated energy signals are required for solving (P1) optimally. We address  this important question in this section and propose an efficient algorithm to solve (P1) by extending the alternating optimization algorithm in Section III.


{Let ${\mv W}_i={\mv w}_i{\mv w}_i^H ,\forall i \in \mathcal{K_I}$
and ${\mv W}_\E=\sum_{j \in \mathcal{K_E}}\mv v_j\mv v_j^H$. Then, it follows that ${\rm{rank}}({\mv W}_i)\le 1,\forall i \in \mathcal{K_I}$
and ${\rm{rank}}({\mv W}_\E)\le {\min}(M,K_E)$. By ignoring the above rank constraints on ${\mv W}_i$'s and ${\mv W}_\E$ similarly as in  \cite{xu2014multiuser}, the semidefinite relaxation (SDR) of (P1), denoted by (SDR1), is given by}
\begin{align}
\!\!\max_{\{\mv{W}_i\},\mv{W}_\E, \bm{\theta}}~
& \sum\limits_{i\in\mathcal{K_I}}{\rm{tr}}(\Ss\mv{W}_i)+{\rm{tr}} (\Ss\mv{W}_\E) \\
\text{s.t.} ~~~~~~&  \frac{ {\rm{tr}}(\mv{h}_i\mv{h}_i^H\mv{W}_i)}{\gamma_i}-\sum\limits_{k\neq i,k\in\mathcal{K_I}}{\rm{tr}}(\mv{h}_i\mv{h}_i^H\mv{W}_k) \nonumber \\
&~~~ - {\rm{tr}}(\mv{h}_i\mv{h}_i^H\mv{W}_\E)-\sigma_i^2 \geq 0,  \forall i \in \mathcal{K_I},\\ &
\sum\limits_{i\in\mathcal{K_I}}{\rm{tr}}(\mv{W}_i)+ {\rm{tr}}(\mv{W}_\E)\leq P,\\
~& {{\mv W}_i}\succeq {\mv 0}, \forall i\in \mathcal{K_I}, ~~{{\mv W}_\E}\succeq {\mv 0}.
\end{align}
Note that for  any fixed $\bm{\theta}$,  (SDR1) is reduced to the same problem as that  in \cite{xu2014multiuser}. However, the result in \cite{xu2014multiuser} that there are no energy beams needed for optimally solving the problem  is not applicable to (SDR1) here. This is because  due to the signal reflection by the IRS with arbitrary phase shifts $\bm{\theta}$, the effective  channels of different users cannot be assumed to be statistically independent in general as in  \cite{xu2014multiuser}.
In addition, even if both  EHRs and IDRs are very far from the IRS such that  $ {\bm{h}}^H_i\approx \bm{h}^H_{d,i}, i\in \mathcal{K_I}$ and $ {\bm{g}}^H_j\approx \bm{g}^H_{d,j}, j \in \mathcal{K_E}$, these channels may still be correlated due to e.g., line-of-sight (LoS) propagation.
This thus motivates the following proposition.  Let the optimal solution of  (SDR1) be $\W^*_i, i\in \K_{\II}$ and $\W^*_{\E}$ for any fixed  $\bm{\theta}$ (i.e., fixed ${\bm{h}}^H_i, i\in \mathcal{K_I}$ and ${\bm{g}}^H_j, j \in \mathcal{K_E}$).
\begin{proposition}\label{rank:result}
For arbitrary user channels ${\bm{h}}^H_i, i\in \mathcal{K_I}$ and ${\bm{g}}^H_j, j \in \mathcal{K_E}$,  problem (SDR1) always has an optimal solution  satisfying  $\W^*_{\E} ={\bm 0}$ and ${\rm{rank}}(\W^*_i)=1$, $\forall i$.
\end{proposition}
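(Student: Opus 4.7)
The plan is to prove Proposition~\ref{rank:result} in two stages: first I would show that the dedicated energy covariance $\W_\E$ can always be absorbed into one of the information covariances without loss of optimality, and then I would invoke a KKT-based rank-one argument for the resulting reduced SDP.

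For the first stage, let $(\{\W^*_i\}_{i\in\K_{\II}}, \W^*_\E)$ be any optimal solution of (SDR1) and pick any $i_0\in\K_{\II}$ (available because at least one IDR is present). Define $\tilde{\W}_{i_0}=\W^*_{i_0}+\W^*_\E$, $\tilde{\W}_i=\W^*_i$ for $i\neq i_0$, and $\tilde{\W}_\E=\bm{0}$. Both the objective and the total transmit power are unchanged by linearity of trace. For user $i_0$, the signal term $\mathrm{tr}(\h_{i_0}\h_{i_0}^H\tilde{\W}_{i_0})$ gains $\mathrm{tr}(\h_{i_0}\h_{i_0}^H\W^*_\E)\ge 0$ while the interference loses exactly this amount, so the SINR is relaxed. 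For every other $i\neq i_0$, the mass of $\W^*_\E$ is merely relabeled as part of $\tilde{\W}_{i_0}$, which already counts as interference to user $i$, so that SINR is unchanged. Hence $(\{\tilde{\W}_i\},\bm{0})$ is feasible and optimal, proving $\W^*_\E=\bm{0}$ is without loss of generality.

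For the second stage, with $\W_\E$ removed, (SDR1) becomes a standard SDP of the same algebraic form as the one in \cite{xu2014multiuser}, but with the effective channels $\h_i,\g_j$ replaced by their IRS-composite counterparts for the given $\bm{\theta}$. Introducing multipliers $\lambda_i\ge 0$ for the SINR constraints, $\mu\ge 0$ for the power budget, and $\Y_i\succeq\bm{0}$ for $\W_i\succeq\bm{0}$, the stationarity condition in $\W_i$ gives $\Y_i=\B_i-(\lambda_i/\gamma_i)\,\h_i\h_i^H$, where $\B_i\triangleq \mu\I-\Ss+\sum_{k\neq i}\lambda_k\h_k\h_k^H$. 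Complementary slackness yields $\Y_i\W^*_i=\bm{0}$, so $\mathrm{range}(\W^*_i)\subseteq\mathrm{null}(\Y_i)$. If we can prove $\B_i\succ\bm{0}$, then $\mathrm{rank}(\Y_i)\ge M-1$, hence $\mathrm{rank}(\W^*_i)\le 1$; because $\gamma_i>0$ forces $\W^*_i\neq\bm{0}$, we conclude $\mathrm{rank}(\W^*_i)=1$.

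The main obstacle is establishing $\B_i\succ\bm{0}$. I would first argue $\mu>0$ by a scaling contradiction: if $\mu=0$, uniformly scaling all $\W_i$ by a factor $t>1$ slightly greater than one preserves every SINR inequality (its LHS scales linearly while $\sigma_i^2$ is fixed, and the bracket is already $\ge\sigma_i^2>0$) and remains power-feasible by slack, yet strictly increases the objective, contradicting optimality. Then, for any $\bm{x}\in\mathrm{null}(\B_i)$, dual feasibility $\Y_i\succeq\bm{0}$ together with $\Y_i=\B_i-(\lambda_i/\gamma_i)\h_i\h_i^H$ forces $\h_i^H\bm{x}=0$; substituting back into $\bm{x}^H\B_i\bm{x}=0$ and exploiting $\mu>0$ along the subspace orthogonal to $\h_i$ should yield $\bm{x}=\bm{0}$, the delicate part being the handling of the negative $-\Ss$ contribution. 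Since this KKT analysis depends only on the algebraic form of the SDP and not on how the composite channels were generated, the conclusion is valid for arbitrary IRS-induced $\h_i,\g_j$, which is what turns the non-IRS result of \cite{xu2014multiuser} into the stronger statement that no statistical independence of user channels is needed.
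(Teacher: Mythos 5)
Your first stage is correct and is in fact more elementary than the paper's own argument. The paper establishes $\W^*_{\E}=\bm{0}$ by invoking Lagrange duality to obtain $\lambda_i\,{\rm{tr}}(\h_i\h_i^H\W^*_{\E})=0$, splitting into the cases $\lambda_i=0,\forall i$ and $\lambda_m>0$ for some $m$, and only then absorbing $\W^*_{\E}$ into $\W^*_m$ using the resulting orthogonality ${\rm{tr}}(\h_m\h_m^H\W^*_{\E})=0$. Your observation that the absorption works for an \emph{arbitrary} $i_0$ with no orthogonality condition --- the reassigned power can only increase user $i_0$'s numerator (by a factor $1+1/\gamma_{i_0}$ of the former energy-beam leakage) while it is already counted as interference by every other IDR --- renders the duality machinery and the case split unnecessary. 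That part of your proposal is a clean simplification.

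The second stage, however, has a genuine gap, precisely at the step you flagged. The claim $\bm{B}_i\succ\bm{0}$ is false in general because of the $-\Ss$ term. Dual feasibility only gives $\bm{B}_i=\bm{Y}_i+(\lambda_i/\gamma_i)\h_i\h_i^H\succeq\bm{0}$. Consider the case where all SINR constraints are inactive at the optimum, so $\lambda_i=0$ for all $i$: then $\bm{Y}_i=\mu\I-\Ss$, and since primal feasibility forces $\W^*_i\neq\bm{0}$, complementary slackness forces $\mu=\lambda_{\max}(\Ss)$, so ${\rm null}(\bm{Y}_i)$ is the principal eigenspace of $\Ss$. Whenever that eigenvalue is not simple (e.g., two EHRs with orthogonal channels of equal strength and equal weights), this null space is multidimensional and optimal $\W^*_i$ of rank greater than one genuinely exist. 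So the statement your KKT route is trying to prove --- that \emph{every} optimal solution is rank one --- is false; the proposition only asserts the \emph{existence} of a rank-one optimal solution. This is why the paper closes the argument by citing the rank-reduction result of \cite{xu2014multiuser} (Remark 3.1), which constructs a rank-one optimizer among the (possibly many) optimal solutions of the separable SDP; some such existence/purification argument is indispensable here and cannot be replaced by positive definiteness of $\bm{B}_i$. Two smaller points: $\mu=0$ does not by itself imply the power constraint is slack, so your scaling argument for $\mu>0$ needs to be rearranged (scale up only when the power constraint is inactive, or first argue it must be tight); and even when $\lambda_i>0$ your substitution leaves $\mu\|\bm{x}\|^2-\bm{x}^H\Ss\bm{x}+\sum_{k\neq i}\lambda_k|\h_k^H\bm{x}|^2=0$, which admits nontrivial solutions exactly because $\mu\I-\Ss$ need not be positive definite.
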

\begin{proof}
Please refer to Appendix A.
\end{proof}

Proposition \ref{rank:result} extends the result in \cite{xu2014multiuser} by showing that even if the users' channels are not statistically  independent, sending  information signals only is sufficient for achieving the maximum weighted sum-power at EHRs. This is expected since sending dedicated energy signals not only consumes transmit power but also potentially causes interference to IDRs.  Based on Proposition \ref{rank:result},  for any given $\bm{\theta}$,   $\W^*_i$'s  can be obtained by solving (SDR1) with $\W^*_{\E}={\bm 0}$ via  standard solvers, e.g., CVX. Then the transmit precoders ${\bm{w}}^*_i$'s can be recovered by performing  eigenvalue decomposition over  the obtained rank-one $\W^*_i$ 's.


%

Next, for given  transmit precoders $\bm{w}^*_i$'s,   (P1) is reduced to
\begin{align}\label{sectIV:theta}
\max_{ \bm{\theta}} ~~~&\sum_{j\in \K_{\E}}\alpha_j   \sum_{i\in K_{\II}} |( \bm{g}^H_{r,j}\ttheta \bm{G} +  \bm{g}^H_{d,j})\bm{w}^*_i|^2\\
\mathrm{s.t.}~~~~&{\text{SINR}}_{i}\geq \gamma_i, \forall i \in \K_{\II}, \\
& 0\leq \theta_n \leq 2\pi, \forall n\in\mathcal{N}.  \label{sectIV:unit}
\end{align}
Similarly as in Section III, let  $\bm{g}^H_{d,j}\bm{w}^*_i ={b}_{j,i}$,    $\bm{h}^H_{d,k}\bm{w}^*_i ={d}_{k,i}$,   $\bm{g}^H_{r,j}\ttheta\bm{G}\bm{w}^*_i  =\bm{u}^H\bm{a}_{j,i}$, and  $\bm{h}^H_{r,k}\ttheta\bm{G}\bm{w}^*_i =\bm{u}^H\bm{c}_{k,i}$    where  $\bm{a}_{j,i}=\text{diag}(\bm{g}^H_{r,j})\bm{G}\bm{w}^*_i$ and $\bm{c}_{k,i}=\text{diag}(\bm{h}^H_{r,k})\bm{G}\bm{w}^*_i$, $k,i   \in \K_{\II}, j\in \K_{\E}$.   As a result, problem \eqref{sectIV:theta} is equivalent  to
{\begin{align}\label{sectIV:theta2}
\max_{ \bm{{u}} }~&\sum_{j\in \K_{\E}}\alpha_j   \sum_{i\in K_{\II}} | \bm{u}^H\bm{a}_{j,i}   +          {b}_{j,i}          |^2 \\
\mathrm{s.t.}~~& \frac{|\bm{u}^H\bm{c}_{i,i} + d_{i,i}  |^2}{\sum_{k\neq i, k\in K_{\II} }|\bm{u}^H\bm{c}_{i,k} + d_{i,k}  |^2 +  \sigma^2_i}\geq \gamma_i,  \forall i, \\
& |u_n|=1, \forall n\in\mathcal{N}.
\end{align}}
Since ${\bm u}$ is involved in both the objection function and constraints, the SCA technique in Section III is not applicable for solving  problem \eqref{sectIV:theta2}.
Let $\bm{\bar{u}}\!=\! [\bm{{u}}; t]$, $\bm{R}^{\E}_{j,i} =[ \bm{a}_{j,i}\bm{a}_{j,i}^H, \bm{a}_{j,i}b^H_{j,i};    \bm{a}_{j,i}^Hb_{j,i},    0]$, and  $\bm{R}^{\II}_{i,k} =[ \bm{c}_{i,k}\bm{c}_{i,k}^H,   \bm{c}_{i,k}d^H_{i,k};    \bm{c}_{i,k}^Hd_{i,k}, 0]$ where $t$ is an  auxiliary variable.
Note that $\bm{\bar{u}}^H\bm{R}^{\E}_{j,i} \bm{\bar{u}}={\rm{tr}}(\bm{R}^{\E}_{j,i} \bm{\bar{u}}\bm{\bar{u}}^H)  $ and  $\bm{\bar{u}}^H\bm{R}^{\II}_{i,k} \bm{\bar{u}}={\rm{tr}}(\bm{R}^{\II}_{i,k} \bm{\bar{u}}\bm{\bar{u}}^H)$.
Define $\bm{V}=\bm{\bar{u}}\bm{\bar{u}}^H$, which needs to satisfy  $\bm{V}\succeq {0}$ and ${\rm{rank}}(\bm{V})=1$. Since the rank-one constraint is non-convex, we drop  this constraint and relax problem \eqref{sectIV:theta2} as
\begin{align}
\max_{\V} ~~~& \sum_{j\in \K_{\E}}\alpha_j   \sum_{i\in K_{\II}}  \left( {\rm{tr}}( {\bm{R}^{\E}_{j,i}}{\bm V})  + |b_{j,i}|^2 \right)  \label{SDP:V} \\
\mathrm{s.t.}~~~~&{\rm{tr}}( {\bm{R}^{\II}_{i,i}}{\bm V})  + |d_{i,i}|^2 \geq \gamma_i\sum_{k\neq i, k\in K_{\II} }   {\rm{tr}}( {\bm{R}^{\II}_{i,k}}{\bm V})  \nonumber \\
~~~~&~~~~~~~~+\gamma_i\sum_{  k\neq i, k\in K_{\II}  } |d_{i,k}|^2 + \gamma_i\sigma^2_i,   \forall i\in K_{\II},\label{P6:SINR:39}\\
~~~~& \bm{V}_{n,n} = 1, \forall n=1,\cdots, N+1,~~~ \bm{V} \succeq 0.  \label{P6:C9}
\end{align} 
Note that problem \eqref{SDP:V} is a  convex semidefinite program (SDP) that can be optimally solved by CVX.
However, it may not lead to a rank-one solution in general, i.e., ${\rm{rank}}(\V)\neq 1$, which implies that the optimal objective value
of problem \eqref{SDP:V} only serves an upper bound of problem \eqref{sectIV:theta}. In this case, Gaussian randomization can be used  to recover a high-quality feasible solution of  problem \eqref{sectIV:theta} similarly as in \cite{JR:wu2018IRS}. The algorithm stops when the fractional increase of the objective function of (P1) by iteratively optimizing information precoders and phase shifts is below a sufficiently small threshold or problem \eqref{SDP:V} becomes infeasible.


\vspace{-0.3cm}
\section{Numerical Results}
This section provides numerical results to validate our proposed design. The signal attenuation at a reference distance of 1 meter (m) is set as 30 dB  for all channels. Since the IRS is usually   deployed to avoid severe signal blockage with the AP and its assisted energy charging zone is practically of small size, the pathloss exponents of both the AP-IRS and IRS-user channels are set to be 2.2, which is lower than that of the AP-user channels assumed equal to 3.6, due to the random locations of the users (both EHRs and IDRs). To account for small-scale fading, we assume Rayleigh fading for the IRS-user and AP-user channels.  As the IRS reflects signals only in its front half-sphere, each reflecting element is assumed to have a 3 dBi gain.  Other required parameters are set as follows unless specified otherwise: $\sigma_i^2=-90$\,dBm,   $\gamma_i=\gamma, \forall i\in \K_{\II}$, $\alpha_j=1, \forall j\in \K_{\E}$, $M=4$, and $N=50$.

 \begin{figure}[!t]
\centering
\includegraphics[width=0.5\textwidth]{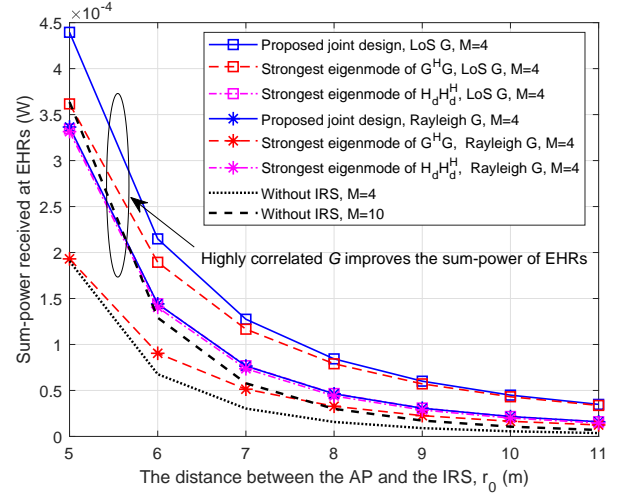}\vspace{-0.1cm}
\caption{Sum-power of EHRs versus the AP-IRS distance. } \label{simulation:distance}\vspace{-0.2cm}
\end{figure}

\subsection{IRS-Aided WPT}
We assume that all EHRs lie between the IRS and the AP with a distance of 2 m from the IRS. The distance between the AP and the IRS is denoted by $r_0$ m.  For comparison, we consider  three benchmark schemes: 1) Strongest eigenmode of $\G^H\G$ where $\vvv^*_0=\sqrt{P}{\bar \vvv}_0(\G^H\G)$, 2) Strongest eigenmode of $\HH_d\HH^H_d$ where  ${\HH}_d=[{\bm h}_{d,1}, \cdots,{\bm h}_{d,K_E}]$ and  ${ \vvv}^*_0=\sqrt{P} {\bar \vvv}_0(\HH_d\HH^H_d)$, and 3) Without IRS where ${ \vvv}^*_0=\sqrt{P} {\bar \vvv}_0(\HH_d\HH^H_d)$. For 1) and 2), the phase shifts are optimized by the proposed design in Section III.  To draw useful insight on the IRS deployment, we consider two cases of $\G$, i.e., $\G$ with all elements being 1 or  $\G$ with all elements following $ \mathcal{CN}(0,1)$ independently, which correspond to  deploying the IRS in an LoS-dominated and a rich-scattering environment (with Rayleigh fading) in practice, respectively.

 In Fig. \ref{simulation:distance}, we plot the received  sum-power of EHRs versus $r_0$ with $K_E=4$.  It is observed that by deploying the IRS around EHRs, the proposed design with $M=4$ can significantly improve the  sum-power of EHRs in both cases of $\G$ as compared to the case without using IRS (for both $M=4$ and $M=10$), and also outperforms other benchmark schemes. Furthermore, one can observe that the LoS channel between the AP and the IRS enables the EHRs to receive more RF energy than that under the Rayleigh fading channel between them. This fundamentally differs from the result in IRS-aided information transmission that the LoS AP-IRS channel generally degrades the performance due to the rank deficiency of $\G$ and the resultant more severe multiuser interference \cite{JR:wu2018IRS}. In contrast, as only one energy beam is sent from the AP for WPT,  channel correlation among  EHRs due to rank-one $\G$ allows them to simultaneously harvest more energy, thus making the energy beamforming more effective.

 \begin{figure}[!t] 
\centering
\includegraphics[width=0.5\textwidth]{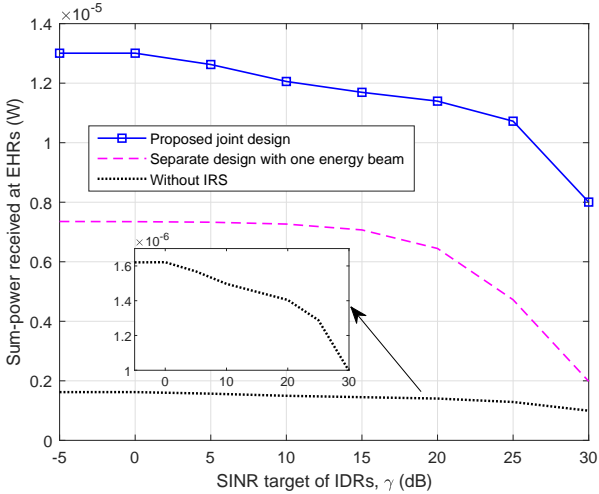}\vspace{-0.1cm}
\caption{Sum-power of EHRs versus SINR target of IDRs. } \label{simulation:tradeoff}\vspace{-0.2cm}
\end{figure}

\subsection{IRS-Aided SWIPT}
For the general case with both IDRs and EHRs, we plot  in Fig. \ref{simulation:tradeoff} the trade-off curve between the received  sum-power of EHRs and the achievable common SINR of IDRs with $K_I=2$ and $K_E=2$.
Motivated by Fig. \ref{simulation:distance}, we consider that the AP-IRS channel is dominated by LoS links with a distance of 15 m.   In addition, the EHRs lie between the IRS and the AP with a distance of 3 m from the IRS,
whereas the IDRs are 50 m away from the AP.  Besides the case without using  IRS, the scheme in \cite{xu2014multiuser} with separately designed information/energy beams is also adopted for comparison. For this scheme, the information beams are designed to minimize the total transmit power required for satisfying all the SINR constraints of IDRs, while the remaining AP transmit power is used to send one  energy beam, denoted by ${\bm v}_0$, to maximize the received  sum-power of EHRs, subject to the constraint without causing any  interference to all IDRs (provided that $K_I \le M-1$), i.e., adding the constraint $\bm{h}^H_{d,i}{\mv v}_0 ={0},\forall  i \in \mathcal{K_I}$ in (P2).

From Fig. \ref{simulation:tradeoff}, it is observed that the achievable power-SINR region of the SWIPT system can be significantly enlarged by deploying the IRS. For example, without compromising the SINR of  IDRs, the  received RF power at EHRs is greatly improved. 
Furthermore, one can observe that the benchmark  design with one dedicated   energy beam suffers considerable performance loss as compared to the proposed design. This is expected since according to Proposition 1, dedicated energy signals should not be used in our considered system.

\section{Conclusion}
This paper studied the weighted sum-power maximization problem in an IRS-assisted SWIPT system. We proposed efficient algorithms to obtain suboptimal solutions for the joint AP precoding and IRS phase shift optimization. In particular,  it was shown that dedicated  energy signal is not required for the general SWIPT system with arbitrary user channels.   Numerical results verified that the proposed design with IRS is able to drastically enlarge the R-E performance trade-off of SWIPT systems. It was also shown that deploying the IRS in strong LoS with  the AP is beneficial for improving the harvested energy of EHRs and thereby the system R-E trade-off.
\vspace{-0.1cm}
\appendices
\section*{Appendix A:  Proof of Proposition 1}  \label{apdx:B}
To prove Proposition \ref{rank:result}, we first introduce the problem formulation without ${{\mv W}_\E}$, denoted by (SDR2), i.e.,
\begin{align}
\mathop{{\max}}_{\{\mv{W}_i\}, \bm{\theta}}~
& \sum\limits_{i\in\mathcal{K_I}} {\rm{tr}}(\Ss\mv{W}_i) \\
\text{s.t.}~~~~&  \frac{{\rm{tr}}(\mv{h}_i\mv{h}_i^H\mv{W}_i)}{\gamma_i}-\sum\limits_{k\neq i,k\in\mathcal{K_I}}{\rm{tr}}(\mv{h}_i\mv{h}_i^H\mv{W}_k),\\
&~~~~~~~~~~ -\sigma_i^2 \geq 0,  \forall i \in \mathcal{K_I},\\ &
\sum\limits_{i\in\mathcal{K_I}}{\rm tr}(\mv{W}_i)\leq P,\\
~& {{\mv W}_i}\succeq {0}, \forall i\in \mathcal{K_I}.
\end{align}
Note that for any given $\bm{\theta}$, ${\bm{h}}^H_i, i\in \mathcal{K_I}$ and ${\bm{g}}^H_j, j \in \mathcal{K_E}$ are fixed. Denote the corresponding optimal objective values of (SDR1) and (SDR2)  by $E^*_1$ and $E^*_2$, respectively.
Then we establish   the equivalence of (SDR1) and (SDR2) by proving  $E^*_1= E^*_2$, via showing 1) $E^*_1\geq E^*_2$ and 2) $E^*_1\leq E^*_2$, respectively.  The proof of $E^*_1\geq E^*_2$ is intuitive since (SDR2) is a special case of (SDR1) with $\mv{W}_\E={\bm 0}$. Thus, we focus on proving $E^*_1\leq E^*_2$ as follows. The key of the proof lies in showing that for any optimal solution to (SDR1), we can always construct a feasible solution to   (SDR2)  that achieves the same weighted sum-power as $E^*_1$.

Suppose that $\{ \mv{W}^*_i\text{'s},  \forall i\in \mathcal{K_I},  \mv{W}^*_{\E} \}$  is the optimal solution to (SDR1) corresponding to  $E_1^*$. By exploiting the Lagrange  duality, it was shown  in \cite{xu2014multiuser} that  the optimal energy covariance matrix should satisfy  
\begin{align}\label{eq:1}
\lambda_i {\rm{tr}}(\HH_i\mv{W}^*_\E)=0,  \forall i\in \mathcal{K_I},
\end{align}
where $\lambda_i$'s are the dual variables associated with the SINR constraints in (SDR1). Next,  we discuss the following two cases: 1) $\lambda_i=0, \forall i\in \mathcal{K_I}$, and 2) there exists at least one $m \in \mathcal{K_I}$ with $\lambda_m>0$. For case 1), it was proved in \cite{xu2014multiuser} that all information beams should align with ${\bar \vvv}_0(\Ss)$ given  in Section III  and thus no dedicated energy signal needs to be sent. For case 2), based on  \eqref{eq:1},  we have
\begin{align}\label{orthogonal:condition}
{\rm{tr}} (\HH_m\mv{W}^*_\E)=0.
\end{align}
The key insight from  \eqref{orthogonal:condition} is that we can always send dummy information beams for user $m$ to replace energy beams without affecting its SINR constraint.
To this end,  we construct another solution to (SDR2) $\{ \widetilde{\mv{W}}_i\text{'s},  \forall i\in \mathcal{K_I} \}$ with $\mv{\widetilde{W}}_i=\mv{W}^*_i, \forall i\neq m$, $\widetilde{\mv W}_m ={\mv W}^*_m+ {\mv W^*_\E}$. It is easy to verify that the newly constructed solution is  feasible for (SDR2) and  achieves the same objective value as that of  (SDR1), i.e.,
\begin{align}\label{}
\widetilde{E}_2 &= \sum_{i\neq m, i\in\mathcal{K_I}} {\rm{tr}}( \Ss \widetilde{\mv W}_i ) + {\rm{tr}}( \Ss \widetilde{\mv W}_m ) \nonumber \\
& = \sum_{i\in \mathcal{K_I} } {\rm{tr}}( \Ss{\mv W}^*_i ) + {\rm{tr}}( \Ss {\mv W}^*_\E )  = E_1^*.
\end{align}
 Since $\widetilde{E}_2 \leq {E}^*_2$, we have ${E}^*_1\leq {E}^*_2$.

 Based on the facts that  ${E}^*_2\leq {E}^*_1$ and ${E}^*_1\leq {E}^*_2$, we obtain ${E}^*_2= {E}^*_1$, which  suggests that we only need to solve  (SDR1) with $\W^*_{\E} ={\bm 0}$.
Furthermore, it was shown in \cite{xu2014multiuser} [Problem (SSDP) in Remark 3.1] that there always exists an optimal solution with all $\mv{W}_i$'s of rank one for (SDR1), which thus completes the proof.


%

\end{document}